\newcommand{\ket}[1]{{\left\vert{#1}\right\rangle}}
\newcommand{\qw}[1][-1]{\ar @{-} [0,#1]}
\newcommand{\qwx}[1][-1]{\ar @{-} [#1,0]}
\newcommand{\gate}[1]{*+<.6em>{#1} \POS ="i","i"+UR;"i"+UL **\dir{-};"i"+DL **\dir{-};"i"+DR **\dir{-};"i"+UR **\dir{-},"i" \qw}
\newcommand{\control}{*!<0em,.025em>-=-<.2em>{\bullet}}
\newcommand{\controlo}{*+<.01em>{\xy -<.095em>*\xycircle<.19em>{} \endxy}}
\newcommand{\ctrl}[1]{\control \qwx[#1] \qw}
\newcommand{\ctrlo}[1]{\controlo \qwx[#1] \qw}
\newcommand{\targ}{*+<.02em,.02em>{\xy ="i","i"-<.39em,0em>;"i"+<.39em,0em> **\dir{-}, "i"-<0em,.39em>;"i"+<0em,.39em> **\dir{-},"i"*\xycircle<.4em>{} \endxy} \qw}
\newcommand{\qswap}{*=<0em>{\times} \qw}
\newcommand{\multigate}[2]{*+<1em,.9em>{\hphantom{#2}} \POS [0,0]="i",[0,0].[#1,0]="e",!C *{#2},"e"+UR;"e"+UL **\dir{-};"e"+DL **\dir{-};"e"+DR **\dir{-};"e"+UR **\dir{-},"i" \qw}
\newcommand{\ghost}[1]{*+<1em,.9em>{\hphantom{#1}} \qw}
\newcommand{\lstick}[1]{*!R!<.5em,0em>=<0em>{#1}}
\newcommand{\Qcircuit}{\xymatrix @*=<0em>}
\definecolor{MyBlue}{rgb}{0.25,0.5,0.75}
\colorlet{c1}{MyBlue!90}
\colorlet{c2}{MyBlue!60}
\colorlet{c3}{MyBlue!30}
\definecolor{MyOrange}{rgb}{0.75,0.5,0.25}
\colorlet{c4}{MyOrange!30}
\colorlet{c5}{MyOrange!60}
\colorlet{c6}{MyOrange!90}
\begin{document}

\newcommand{\willers}[1]{\textbf{\color{blue}#1}}
\newcommand{\dmitri}[1]{\textbf{\color{Forestc2}#1}}

\newcommand{\Gate}[1]{\textsc{#1}}
\newcommand{\idgate}{\Gate{ID}}
\newcommand{\cnotgate}{\Gate{CNOT}} 
\newcommand{\czgate}{\Gate{CZ}}
\newcommand{\rzgate}{\Gate{rz}} 
\newcommand{\swapgate}{\Gate{SWAP}} 
\newcommand{\hgate}{\Gate{H}}
\newcommand{\xgate}{\Gate{X}}
\newcommand{\ygate}{\Gate{Y}}
\newcommand{\zgate}{\Gate{Z}}
\newcommand{\pgate}{\Gate{P}} 
\newcommand{\pdgate}{\Gate{Pd}}
\newcommand{\tgate}{\Gate{T}} 
\newcommand{\tdgate}{\Gate{Td}} 
\newcommand{\xxgate}{\Gate{xx}} 
\newcommand{\rgate}{\Gate{r}} 

\newtheorem{dfn}{Definition}
\newtheorem{prop}{Proposition}
\newtheorem{lemma}{Lemma}
\newtheorem{fact}{Fact}
\newtheorem{conj}{Conjecture}
\newtheorem{theorem}{Theorem}
\newtheorem{cor}{Corollary}

\newcommand{\eq}[1]{Eq.~(\ref{eq:#1})}
\renewcommand{\sec}[1]{\hyperref[sec:#1]{Section~\ref*{sec:#1}}}
\newcommand{\ssec}[1]{\hyperref[ssec:#1]{Subsection~\ref*{ssec:#1}}}
\newcommand{\fig}[1]{\hyperref[fig:#1]{Figure~\ref*{fig:#1}}}
\newcommand{\tab}[1]{\hyperref[tab:#1]{Table~\ref*{tab:#1}}}
\newcommand{\lem}[1]{\hyperref[lem:#1]{Lemma~\ref*{lem:#1}}}
\newcommand{\propos}[1]{\hyperref[propos:#1]{Proposition~\ref*{propos:#1}}}
\newcommand{\thm}[1]{\hyperref[thm:#1]{Theorem~\ref*{thm:#1}}}
\newcommand{\alg}[1]{\hyperref[alg:#1]{Algorithm~\ref*{alg:#1}}}

\title{CNOT circuits need little help to implement arbitrary Hadamard-free Clifford transformations they generate}

% in lexicographic order pending further discussion.
\author{Dmitri Maslov}
\affiliation{IBM Quantum, IBM T. J. Watson Research Center, Yorktown Heights, NY 10598, USA}
\author{Willers Yang}
\affiliation{IBM Quantum, IBM T. J. Watson Research Center, Yorktown Heights, NY 10598, USA}

\begin{abstract}
A Hadamard-free Clifford transformation is a circuit composed of quantum Phase ($\pgate$), $\czgate$, and $\cnotgate$ gates. It is known that such a circuit can be written as a three-stage computation, -P-CZ-CNOT-, where each stage consists only of gates of the specified type.

In this paper, we focus on the minimization of circuit depth by entangling gates, corresponding to the important time-to-solution metric and the reduction of noise due to decoherence. We consider two popular connectivity maps: Linear Nearest Neighbor (LNN) and all-to-all. First, we show that a Hadamard-free Clifford operation can be implemented over LNN in depth $5n$, i.e., in the same depth as the -CNOT- stage alone. This allows us to implement arbitrary Clifford transformation over LNN in depth no more than $7n{-}4$, improving the best previous upper bound of $9n$. Second, we report heuristic evidence that on average a random uniformly distributed Hadamard-free Clifford transformation over $n{>}6$ qubits can be implemented with only a tiny additive overhead over all-to-all connected architecture compared to the best-known depth-optimized implementation of the -CNOT- stage alone.  This suggests the reduction of the depth of Clifford circuits from $2n\,{+}\,O(\log^2(n))$ to $1.5n\,{+}\,O(\log^2(n))$ over unrestricted architectures.
\end{abstract}

\maketitle

\section{Introduction}\label{sec:Intro}

% general intro, motivation
Clifford circuits are ubiquitous in quantum computing.  They appeared early in the study of quantum computations.  For example, Bernstein–Vazirani algorithm \cite{bernstein1993quantum}, which is a Clifford circuit in its entirety, served to establish a separation between quantum and classical computations in the black-box model.  In early (as well as modern) experimental work, randomized benchmarking \cite{knill2008randomized, magesan2011scalable}, performed entirely by Clifford circuits, was found to be a powerful tool for establishing the quality of quantum gates/computations.  Quantum error correction is based on Clifford circuits, including both logical state encoding (as well as the underlying formalism) \cite{gottesman1997stabilizer, nielsen2002quantum} and state purification required to obtain non-Clifford gates on the logical level \cite{bravyi2005universal}.  Fault-tolerant and even physical-level circuits (although slightly less so) are frequently considered over Clifford+$\tgate$ and/or Clifford+$R_z$ libraries---the role of the Clifford circuits follows from the library names themselves.  Other notable use cases include shadow tomography \cite{aaronson2018shadow}, study of entanglement \cite{bennett1996mixed}, and more. 

% circuit cost
There are two popular choices to evaluate a circuit cost by a single number: weighted gate count, and weighted depth.  The weights depend on whether one works with physical-level or logical-level circuits and the details of physical-level control and error correcting approach, correspondingly.  Given that logical-level Clifford circuits can often be implemented transversely, the costs are most likely determined by the physical-level constraints.  Both leading quantum technologies, superconducting circuits \cite{superconducting} and trapped ions \cite{trappedions}, offer single-qubit gates at a fraction of the cost of the two-qubit gates.  It thus makes the most sense to discard the single-qubit gates and simply count the entangling gates.  We furthermore choose to focus on optimizing the depth by entangling gates rather than the gate count.  We believe this to be a better choice since the depth most accurately reflects the time to solution, and the purpose of any computation is above all to optimize this figure of merit. Secondly, minimizing depth helps to optimize the fidelity in computations where noise is dominated by decoherence, such as the case in current superconducting circuit devices \cite{superconducting}.  For quantum technologies where noise may be dominated by a coherent source, such as the trapped ions \cite{linke2017experimental}, one may too prefer to optimize depth, since the noise due to random over/underrotations may scale sublinearly with the number of gates \cite{linke2017experimental} or even as slow as the square root \cite{campbell2017shorter}, whereas the effect of the decoherence is always linear with time.  Thus, a smaller decoherence noise channel may cause a larger disturbance in sufficiently deep computations compared to a coherent over/underrotation channel, due to coherent error cancellations.

% architectures
All quantum computations need to be broken down into single-qubit and two-qubit gates, which are then implemented in hardware.  A single-qubit gate can be implemented by directly addressing the respective physical qubit, but the two-qubit gates require addressing a pair of qubits.  Depending on the technology and particulars of the controlling apparatus, it may not always be possible.  Of the two leading technologies, trapped ions \cite{trappedions} is generally said to offer all-to-all qubit connectivity.  However, this does not remain the case for arbitrary numbers of qubits, and it is possible that keeping all-to-all connectivity beyond 100 qubits can be challenging \cite{linke2017experimental}.  In our work, we address the possibility of unrestricted architectures by considering the task of depth optimization without limiting the pairs of qubits to which an elementary two-qubit gate can apply.  In contrast, superconducting circuits technology \cite{superconducting} offered limited connectivity at the offset, including the first 5-qubit IBM quantum processor that premiered in 2016 \cite{ibm2016ibm}.  Current offerings by the companies focusing on superconducting technology include a 2-dimensional square lattice (Google), heavy hexagonal lattice (where each edge of a hexagonal lattice contains a qubit in the middle, IBM), 2-dimensional square lattice with diagonal terms (IBM, presently retired Tokyo architecture), and a square lattice with octagons (Rigetti).  What unites these is the ability to find a chain subgraph in each, that is, embed the Linear Nearest Neighbor (LNN) architecture.  A chain is frequently a subgraph of many other graphs and lattices.  As a result, and due to its universality, we also focus on implementing Clifford circuits over the LNN architecture. 

% known results
Clifford circuits have been studied quite extensively, leading to numerous efficient implementations addressing a number of metrics of value.  A first noteworthy result in this area was the development of an 11-stage layered decomposition -H-CNOT-P-CNOT-P-CNOT-H-P-CNOT-P-CNOT- \cite{aaronson2004improved}, which, together with gate-count asymptotically optimal synthesis algorithm for linear reversible circuits \cite{patel2008optimal} settled the asymptotic gate complexity.  This asymptotically optimal algorithm can be parallelized \cite{de2021reducing, jiang2020optimal} to offer depth upper bound guarantee of $O(n/\log(n))$, leading to asymptotic depth optimality.  However, the leading constant in front of the $n/\log(n)$ term is as high as 20 \cite{de2021reducing}, resulting in circuits that become advantageous compared to those featuring a higher asymptotic complexity of $O(n)$ only when the number of qubits exceeds $1{,}345{,}000$ or more \cite{de2021reducing, maslov2022depth}.  Indeed, \cite{maslov2022depth} offers an improvement over \cite{de2021reducing}, implying that the performance crossover number $1{,}345{,}000$ was likely pushed higher by the newer work.  In either case, we do not anticipate executing Clifford circuits spanning this many qubits on quantum computers.  This suggests that the currently known constructions achieving asymptotic optimality offer theoretical guarantees but are not necessarily practical.  We also note that constructions offering asymptotic optimality and exploring ancillary space have also been considered \cite{jiang2020optimal}.  However, leading constants are high \cite{de2021reducing}.   

% known results continued
A more practical approach to implementing Clifford circuits would be to first start with a short layered decomposition -X-Z-P-CNOT-CZ-H-CZ-H-P- \cite{duncan2020graph, bravyi2021hadamard}, offering only three entangling stages, as opposed to the original five \cite{aaronson2004improved}, and noting that a -CZ- stage is `simpler' than a -CNOT- stage.  Then, depending on the qubit connectivity:
\begin{itemize}
    \item over LNN: -X-Z-P-CNOT-CZ-H-CZ-H-P- can be rewritten as -X-P-CNOT$\widehat{\text{-CZ-}}$H$\widehat{\text{-CZ-}}$H-P-, where $\widehat{\text{-CZ-}}$ is a -CZ- stage together with qubit reversal. The -CNOT- stage can be implemented in depth at most $5n$ \cite{kutin2007computation}, and $\widehat{\text{-CZ-}}$ can be implemented in depth $2n{+}2$ \cite{maslov2018shorter}, which combined with some local optimizations leads to the depth $9n$ implementation of arbitrary Clifford circuits;
    \item over all-to-all: both -CZ- and -CNOT- stages are best implemented according to \cite{maslov2022depth}, resulting in depth $2n+O(\log^2(n))$ circuits.
\end{itemize}
Here, we show that a -CNOT-CZ- stage can always be implemented in depth $5n$ over LNN, leading to Clifford circuit depth guarantee of no more than $7n{-}4$.  We also show computational evidence that a -CNOT-CZ- stage can be implemented in depth $n+O(\log^2(n))$ over all-to-all architecture, leading to a likely upper bound of $1.5n+O(\log^2(n))$ on the depth of Clifford circuits in unrestricted architectures.

% paper org
The rest of the paper is organized as follows. \sec{background} discusses basic background on the subject of $\czgate$ and $\cnotgate$ circuits, highlighting how sets of linear functions can form a basis in the linear space of $\czgate$ circuits. \sec{lnn} focuses on developing circuits over LNN architecture.  It describes a depth-$5n$ implementation of the joint -CNOT-CZ- stage and offers an $O(n^3)$ algorithm to construct it.  We employ this construction to implement arbitrary Clifford operation in depth $7n{-}4$. \sec{all-to-all} reports a heuristic solution to the problem of optimizing circuit depth of the implementation of the -CNOT-CZ- stage in the all-to-all connected architecture.  Final remarks and conclusion can be found in \sec{conc}.

\section{Background}\label{sec:background}
The core of technical discussions is the study of quantum circuits composed with Phase ($\pgate$), $\czgate$, and $\cnotgate$ gates.  These gates can be defined as the unitary matrices or by the transformations they perform over basis states, as follows: 
\begin{align*}
\pgate(x): \ket{x}&\mapsto i^x\ket{x},\\ 
\czgate(x,y): \ket{x,y}&\mapsto (-1)^{xy}\ket{x,y},\text{ and} \\
\cnotgate(x,y): \ket{x,y}&\mapsto \ket{x,x\oplus y}. 
\end{align*}
Such circuits compute Hadamard-free Clifford transformations and form a finite group.  It was shown \cite{maslov2018shorter} that the circuits with Phase, $\czgate$, and $\cnotgate$ gates can be computed as a three-stage layered computation -P-CZ-CNOT-, where each layer can be composed using the gates of its specified type, and the layers can come in any order.  This layered decomposition offers an efficient way of implementing Hadamard-free circuits. Indeed, -P- layer may consist of the gates $\idgate$ (identity), $\pgate$, $\zgate\,{=}\,\pgate^2$, and $\pgate^\dagger\,{=}\,\pgate^3$ on each qubit, and thus is trivial to compose.  A -CZ- layer can be implemented straightforwardly and optimally by the $\czgate$ gates alone (see next paragraph).  A -CNOT- layer, also known as linear reversible circuits, has been studied extensively \cite{moore2001parallel, kutin2007computation, patel2008optimal, jiang2020optimal, maslov2022depth}.  We note that both -CZ- layer can be implemented more efficiently by utilizing $\pgate$ and $\cnotgate$ gates \cite{maslov2018shorter} and linear reversible circuits can be improved by utilizing $\pgate$ and $\hgate$ (Hadamard) gates \cite{bravyi2021hadamard}.  Here, we employ related rules to reduce the depth of the decomposition -P-CZ-CNOT- by mixing the individual layers.
 
First, recall some basic properties of circuits composed with $\czgate$ gates.  $\czgate(x,y)\,{=}\,\czgate(y,x)$, meaning the order of control and target does not matter.  A $\czgate$ gate is self-inverse; in other words, two neighboring $\czgate$ gates operating over the same set of qubits can be removed from the circuit without affecting its functionality.  Any two $\czgate$ gates commute.  This means that any pair of identical $\czgate$ gates can be removed and allows to express arbitrary $\czgate$ circuit over $n$ qubits by listing a set of at most $\frac{(n-1)n}{2}$ pairs of qubits to which such gates apply.  We further conclude that $\czgate$ circuits span the linear space $\mathbb{F}_2^{(n-1)n/2}$ over the binary field $\mathbb{F}_2\,{:=}\,\{0,1\}$.  A linear space has a basis, which will become an important consideration in future discussions.  The standard basis is $\{\czgate(i,j)|_{1 \leq i < j \leq n}\}$.

A $\czgate$ circuit can also be implemented by inserting Phase gates into linear reversible circuits, by considering phase polynomials. To illustrate how this works, consider a well-known circuit identity:
\begin{equation*}
%\label{eq:CZtoP}
\Qcircuit @C=0.7em @R=1.7em @!{
\lstick{\ket{x}} & \ctrl{1}  & \qw \\ 
\lstick{\ket{y}} & \ctrl{-1} & \qw
}
\hspace{1em}\raisebox{-0.9em}{=}\hspace{1em}
\Qcircuit @C=0.2em @R=0.2em @!{
& \gate{\pgate}   & \ctrl{1} & \qw                  & \ctrl{1} 	& \qw \\ 
& \gate{\pgate}   & \targ    & \gate{\pgate^\dagger}& \targ     & \qw }
\end{equation*}
On the left hand side, we have a $\czgate(x,y)$ which applies the phase $-1^{x\cdot y}=i^{2x\cdot y}$; and on the right hand side, we have three Phase gates, $\pgate{\ket{x}},\pgate{\ket{y}},$ and $\pgate^\dagger{\ket{x\oplus y}}$, which apply phases $i^x$, $i^y$ and $i^{-x\oplus y}$. The equality holds due to the mixed arithmetic identity $2x{\cdot} y \equiv x + y - x{\oplus} y \ (\bmod \,\,4)$.  Since the primary variables, $x$ and $y$, are available to experience the application of Phase gates on the input side of the circuit, it shows that the gate $\czgate(x,y)$ can be computed via Phase gate insertion by a linear reversible circuit that computes the linear function $x \oplus y$ at some point.  Furthermore, a linear reversible circuit spanning qubits $x_1,x_2,...,x_n$ and computing linear functions $x_i {\oplus} x_j$ for all $i,j: 1 \leq i < j \leq n$ offers enough opportunity to insert Phase gates to compute arbitrary $\czgate$ gate transformation, since $\{\czgate(i,j)|_{1 \leq i < j \leq n}\}$ is a basis.

Next consider a linear function $f = x_{i_1}\oplus x_{i_2}\oplus... \oplus x_{i_k}$, spanning a subset of $k$ qubits.  It is known and is easy to verify explicitly that up to a layer of Phase gates applied to primary variables on the input side of the circuit, a Phase gate applied to $f$ is equivalent to the CZ circuit $\prod_{1\leq s<t\leq k} \czgate(x_{s},x_{t})$.  This observation offers a natural way to treat arbitrary linear functions as $\czgate$ circuits or otherwise vectors in the $\frac{(n-1)n}{2}$-dimensional linear space spanning $\czgate$ circuits. 

Finally, we recall another important identity, that we express using phase polynomials as follows:  
\begin{equation}\label{eq:7Term}
a + b + c - a{\oplus}b - a{\oplus}c - b{\oplus}c + a{\oplus}b{\oplus}c  \equiv 0 (\bmod 4).
\end{equation} 
Here, each literal corresponds to some linear function of primary variables. A summand with the positive sign expresses the application of the $\pgate$ gate to it, a summand with the negative sign indicates the application of the $\pgate^\dagger$ gate, and the sum is taken modulo-4 to reflect the phase identity $i^4\,{=}\,1$. This seven-term identity says that a circuit implementing the set of seven Phase operations participating as summands computes the identity function. For the purpose of our work, \eq{7Term} is a very important identity that establishes linear dependence of linear functions as vectors in the space $\mathbb{F}_2^{(n-1)n/2}$---not to mistake with the linear dependence as regular linear functions.  Applying this identity most frequently comes in the form of rewriting one of seven terms as a combination of the remaining six; it lies at the core of the constructions that follow. In particular, we focus on the discovery of a full $\czgate$ basis in the linear functions computed inside a -CNOT- stage as means for merging a -CZ- stage with the -CNOT- stage.

\section{-CNOT-CZ- circuits over Linear Nearest Neighbour architecture} \label{sec:lnn}

First, consider the LNN architecture, where qubits are arranged in a line and only adjacent qubits are allowed to interact. Our main result can be summarized as follows.
\begin{theorem}\label{thm:1}
Any Hadamard-free Clifford transformation over $n$ qubits can be implemented over LNN architecture as a circuit with the two-qubit gate depth of at most $5n$. 
\end{theorem}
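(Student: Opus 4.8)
The plan is to begin from the three-stage decomposition -P-CZ-CNOT- guaranteed by \cite{maslov2018shorter} and to observe that, under the two-qubit depth metric, the -P- stage is free: it consists solely of single-qubit $\pgate$, $\zgate$, and $\pgate^\dagger$ gates and can be folded into any single-qubit layer without affecting the entangling depth. It therefore suffices to implement the combined -CZ-CNOT- stage in two-qubit depth $5n$. The -CNOT- stage is an arbitrary linear reversible transformation, and by the construction of Kutin, Moulton, and Smithline \cite{kutin2007computation} it can be realized over LNN in two-qubit depth at most $5n$. The whole argument thus reduces to a single claim: the prescribed -CZ- stage can be absorbed into this fixed depth-$5n$ -CNOT- circuit using only single-qubit Phase-gate insertions, which add nothing to the two-qubit depth.

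To make this precise I would use the phase-polynomial dictionary of \sec{background}. A Phase gate inserted on a wire carrying the linear function $f=\bigoplus_{i\in S}x_i$ is equivalent, up to a layer of single-qubit Phase gates on the primary inputs (again free), to the $\czgate$ circuit $\prod_{s<t\in S}\czgate(x_s,x_t)$; viewed in the $\binom{n}{2}$-dimensional space $\mathbb{F}_2^{(n-1)n/2}$ that spans $\czgate$ circuits, this is the ``clique'' vector supported on all pairs drawn from $S$. Inserting several Phase gates adds their clique vectors modulo $2$, and \eq{7Term} records exactly the linear relations among these vectors, allowing any one clique to be rewritten as a combination of the others. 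Realizing a given -CZ- stage then amounts to writing its $\czgate$ vector as an $\mathbb{F}_2$-linear combination of clique vectors of linear functions that genuinely appear on some wire at some time step of the depth-$5n$ -CNOT- circuit.

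The key step, and the one I expect to be the main obstacle, is to prove that the linear functions occurring on the wires throughout the depth-$5n$ LNN -CNOT- circuit have clique vectors of full rank $\tfrac{n(n-1)}{2}$ in $\mathbb{F}_2^{(n-1)n/2}$; equivalently, that the circuit internally exposes a full $\czgate$ basis. I would prove this by tracking the ``wire history'' of the construction: at each layer and each wire, record the linear function currently carried, and show from the explicit brick-wall/staircase pattern of the reduction that these functions generate, via the seven-term rewriting of \eq{7Term}, all the standard basis cliques $\{\czgate(i,j)\}$. The delicate point is that the catalogue of available functions depends on the target linear transformation, so I would address this by exhibiting within the construction a target-independent collection of wire functions whose clique vectors are provably independent and $\binom{n}{2}$ in number, and then invoke \eq{7Term} to convert the prescribed -CZ- vector into insertions supported on that collection. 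Given the catalogue, determining the Phase-gate placement is just Gaussian elimination over $\mathbb{F}_2$, which accounts for the $O(n^3)$ construction time promised in \sec{lnn}.

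Two residual bookkeeping items would then close the argument. First, should the LNN linear-reversible construction produce the target map only up to a reversal of the qubit order, I would absorb that reversal into the description of the -CZ- stage (the $\widehat{\text{-CZ-}}$ of \sec{Intro}), so that no extra entangling gates are incurred. Second, all single-qubit corrections generated by the phase-polynomial identities, together with the original -P- stage, are gathered into single-qubit layers that are invisible to the two-qubit depth count. Combining these, the combined -CZ-CNOT- stage, and hence the full Hadamard-free Clifford transformation, is implemented in two-qubit depth at most $5n$.
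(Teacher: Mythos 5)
Your overall reduction is the same as the paper's: fold the -P- stage away, implement the -CNOT- stage by the Kutin--Moulton--Smithline depth-$5n$ construction, and absorb the -CZ- stage by Phase-gate insertions, which requires showing that the linear functions appearing on the wires of that fixed circuit span the full $\frac{(n-1)n}{2}$-dimensional $\czgate$ space. However, the proposal leaves precisely this spanning claim unproved, and the route you sketch for it cannot work as stated. You propose to exhibit ``a target-independent collection of wire functions whose clique vectors are provably independent and $\binom{n}{2}$ in number,'' but no such collection exists: the functions carried by the wires of the synthesis circuit depend on the target linear transformation (equivalently, on which boxes of the sorting network are $\swapgate$ versus $\swapgate^+$), and which of them carry independent clique vectors changes with the target. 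Already for $n=2$, if the single box is a $\swapgate$ the basis vector comes from the EXOR generated inside the box, while if it is a $\swapgate^+$ the box's internal function is a primary variable (zero clique vector) and the basis vector instead comes from an input wire. So independence at any fixed, target-independent set of locations fails in general; what is true, and what must be proved, is that the target-dependent catalogue always \emph{spans}.

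The paper closes this gap with an argument you would still need to supply: restrict attention to the structured half $C_2$ of the construction (the modified sorting network diagonalizing a northwest-triangular matrix), and induct on the number $k$ of $\swapgate^+$ boxes. The base case $k=0$ is the pure sorting network, where every pair of qubits meets exactly once and the functions $x_i\oplus x_j$ form the standard basis (\lem{base}). For the inductive step one replaces the leftmost, first-in-layer-order $\swapgate^+$ box by a $\swapgate$, characterizes exactly which wire functions change (\lem{dif}), and shows every Phase insertion available in the modified circuit can be rescheduled into the original one using either a single Phase gate or six Phase gates via the seven-term identity \eq{7Term} (\lem{replace}); this rescheduling in turn relies on an ordering property of the sorting network (\propos{order}) guaranteeing the six required locations all exist early enough in the circuit. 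Your proposal correctly identifies \eq{7Term} as the rewriting tool and correctly reduces the theorem to the spanning claim, but without the induction and these structural facts about the sorting network, the central step of the proof is missing.
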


At first glance, \thm{1} may be surprising. In the LNN architecture, a -CNOT- layer alone requires a circuit of depth $5n$ when constructed using the best known synthesis algorithm \cite{kutin2007computation}. Since Hadamard-free Clifford transformations can be written as a three stage computation -P-CZ-CNOT- \cite{maslov2018shorter}, our result implies that the -CZ- layer adjacent to a -CNOT- layer can be implemented ``for free" (without increasing the two-qubit gate depth) over the LNN architecture through $\pgate$ gate insertions. As we'll show in \sec{cliff}, \thm{1} also allows us to implement any Clifford operation as a circuit of depth $7n{-}4$ over LNN, improving the prior best-known upper bound of $9n$ \cite{bravyi2021hadamard}. In the remainder of this section, we take a closer look at the -CNOT- synthesis algorithm of \cite{kutin2007computation} in \ssec{CNOT} before describing an efficient Hadamard-free synthesis algorithm (\ssec{CZ}) that achieves the advertised upper bound in \thm{1}.

\subsection{-CNOT- Synthesis over LNN}\label{ssec:CNOT}

The best-known synthesis algorithm for $\cnotgate$ circuit synthesis over LNN achieves depth $5n$ by starting with two back-to-back copies of the sorting network, $C_1$ and $C_2$, and modifying them according to a set of rules \cite{kutin2007computation}. Since a $\cnotgate$ transformation corresponds to a linear reversible operation, we can represent it as an $n{\times}n$ invertible Boolean matrix $M$. Application of the $\cnotgate$ gate performs modulo-2 addition (EXOR, denoted $\oplus$) of the matrix column corresponding to the control qubit to the column corresponding to the target qubit of the given $\cnotgate$ gate. The synthesis of $M$ using $\cnotgate$ gates can therefore be thought of as the diagonalization of $M$ via column operations. The diagonalization is done in two steps \cite{kutin2007computation}: first, $M$ is transformed into a northwest-triangular matrix $M'$ using a modification of the $\cnotgate$ circuit $C_1$ of depth at most $2n$, and then, $M'$ is diagonalized by modifying the sorting network $C_2$ in depth at most $3n$ \cite{kutin2007computation}.  An $n{\times}n$ matrix is called northwest-triangular if all entries below the antidiagonal are $0$. The combined circuit $C_1C_2$ diagonalizes $M$ and has depth at most $2n{+}3n \,{=}\, 5n$; therefore, $(C_1C_2)^{-1}$ is a depth-$5n$ circuit that implements $M$.  

We focus our attention on $C_2$, the circuit diagonalizing a northwest-triangular matrix, since it has a well-defined structure. In \ssec{CZ}, we prove that $C_2$ always generates enough linear functions to induce any -CZ- circuit via Phase gate insertions. $C_2$ starts as the sorting network \cite{kutin2007computation} and uses $(n{-}1)n/2$ ``boxes'' in $n$ alternating layers, each being either $\swapgate$ or $\swapgate^+$.  The $\swapgate^+$ gate is a two-qubit gate that acts on its inputs as follows, $\swapgate^+(x,y): \ket{x,y}\mapsto \ket{y, x\oplus y}$. In the circuit notation,
\[
\hspace{1em}\raisebox{-0.9em}{$\swapgate^+(x,y):=$}\hspace{3em}
\Qcircuit @C=0.7em @R=1.7em @!{
\lstick{\ket{x}} & \qswap \qwx[1]  & \qw &&\ket{y}\\ 
\lstick{\ket{y}} & \ctrlo{-1} & \qw&&\ket{x\oplus y}
}
\hspace{2em}\raisebox{-0.9em}{$=$}\hspace{3em}
\Qcircuit @C=0.7em @R=1.3em @!{
\lstick{\ket{x}} & \ctrl{1}  & \targ&\qw &&\ket{y}\\ 
\lstick{\ket{y}} & \targ & \ctrl{-1}& \qw&&\ket{x\oplus y}.
}
\]
Whether a box within the sorting network is a $\swapgate$ or $\swapgate^+$ is determined by the corresponding northwest-triangular matrix \cite{kutin2007computation}. In fact, the correspondence between northwest-triangular matrices and such sorting networks is bijective, which can be established by the counting argument.  An example of a sorting network on $6$ qubits and the northwest-triangular matrix it diagonalizes is given in \fig{C}.

\begin{figure}[h]

$$M =  \begin{bmatrix}
0 & 0 & 1 & 0 & 1 & 1 \\
0 & 0 & 0 & 1 & 1 & 0 \\
0 & 1 & 1 & 1 & 0 & 0 \\
1 & 1 & 1 & 0 & 0 & 0 \\
0 & 1 & 0 & 0 & 0 & 0 \\
1 & 0 & 0 & 0 & 0 & 0 
\end{bmatrix},$$
\[
\raisebox{-3em}{$C: $}\hspace{6em}
\Qcircuit @C=1.2em @R=1.2em @! { \\& \lstick{ x_4\oplus x_6:  } & \qswap & \qw & \qswap \qwx[1] & \qw & \qswap & \qw & \qw &{x_1  }\\
	 	 & \lstick{ x_3\oplus x_4\oplus x_5 :  } & \qswap \qwx[-1] & \qswap & \controlo\qw & \qswap & \qswap \qwx[-1] & \qswap \qwx[1] & \qw&{x_2  } \\
	 	 & \lstick{x_1\oplus x_3\oplus x_4:  } & \qswap \qwx[1] & \qswap \qwx[-1] & \qswap & \qswap \qwx[-1] & \qswap \qwx[1] & \controlo\qw & \qw &{x_3  }\\
	 	 & \lstick{ x_2\oplus x_3 :  } & \controlo\qw & \qswap \qwx[1] & \qswap \qwx[-1] & \qswap & \controlo\qw & \qswap \qwx[1] & \qw &{x_4  }\\
	 	 & \lstick{x_1\oplus x_2:  } & \qswap \qwx[1] & \controlo\qw & \qswap \qwx[1] & \qswap \qwx[-1] & \qswap \qwx[1] & \controlo\qw & \qw&{x_5  } \\
	 	& \lstick{ x_1:  } & \controlo\qw & \qw & \controlo\qw & \qw & \controlo\qw & \qw & \qw&{x_6  } \\
\\ }
\]

    \caption{Initially, qubit $i$ contains the linear function represented by the $i^{th}$ column of $M$. Circuit $C$ diagonalizes the matrix $M$.  Outputs of the circuit $C$ are the primary variables.}
    \label{fig:C}
\end{figure}

We next label qubits and boxes of the sorting network and define an ordering of the boxes. Using the labeling and ordering, we make several important observations on the structure of the sorting network that lie at the heart of our algorithm. 

First, label each qubit $i$ on the output side of the circuit with the label $n{+}1{-}i$ in the input side.  This is justified by the observation that the sorting network inverts the order of the qubits.  When a box (either $\swapgate$ or $\swapgate^+$) acts on the qubits $i$ and $i{+}1$ with labels $l_{i}$ and $l_{i+1}$, we swap the labels. As shown in \cite[Section 7.3]{kutin2007computation}, the label of a qubit always equals to the largest index of the primary variable stored in that qubit.  For example, in \fig{C}, the first qubit initially holds the value $x_4\oplus x_6$, corresponding to its label, $6\,{=}\,6{+}1{-}1$. It can also be shown that for each pair of labels $i{\neq}j$, there is exactly one box that changes $i$ and $j$, and  this box always shuffles the larger label downward and the smaller label upward. That is, whenever a box swaps two labels $(i,j)\mapsto(j,i)$, then ${i}>j$. It follows that we can uniquely label each box with two indices $i$ and $j$, where $i$ and $j$ are the labels of the qubits it swaps.  An example of this labeling is shown in \fig{Labeling}.

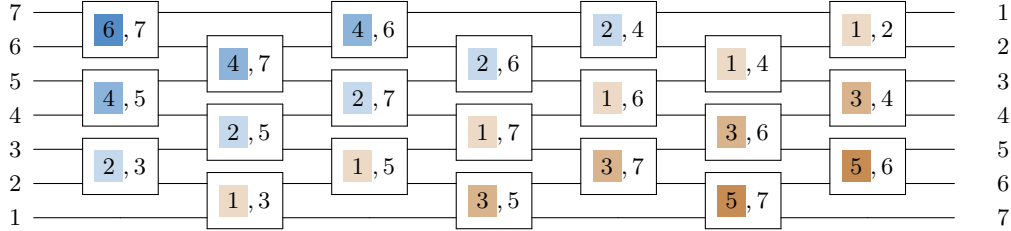
\begin{figure}
\[
\scalebox{1.0}{
\Qcircuit @C=2.0em @R=.5em @!R { \\
\lstick{{7 }    } &\multigate{1}{\colorbox{c1}{6},7} & \qw& \multigate{1}{\colorbox{c2}{4},6} & \qw&\multigate{1}{\colorbox{c3}{2},4} & \qw& \multigate{1}{\colorbox{c4}{1},2}  & \qw &1\\
\lstick{{6 }    } & \ghost{\colorbox{red}{6},7}& \multigate{1}{\colorbox{c2}{4},7} & \ghost{\colorbox{c2}{4},6}& \multigate{1}{\colorbox{c3}{2},6} & \ghost{\colorbox{c3}{2},4}&\multigate{1}{\colorbox{c4}{1},4} & \ghost{\colorbox{c4}{1},2}& \qw &2\\
\lstick{{5 }    } & \multigate{1}{\colorbox{c2}{4},5} & \ghost{\colorbox{c2}{4},7}&\multigate{1}{\colorbox{c3}{2},7} & \ghost{\colorbox{c3}{2},6}& \multigate{1}{\colorbox{c4}{1},6} & \ghost{\colorbox{c4}{1},4}& \multigate{1}{\colorbox{c5}{3},4}  & \qw &3\\
\lstick{{4 }    } & \ghost{\colorbox{c2}{4},5}& \multigate{1}{\colorbox{c3}{2},5} & \ghost{\colorbox{c3}{2},7}&\multigate{1}{\colorbox{c4}{1},7} & \ghost{\colorbox{c4}{1},6}&\multigate{1}{\colorbox{c5}{3},6} & \ghost{\colorbox{c5}{3},4}& \qw &4\\
\lstick{{3 }    } &\multigate{1}{\colorbox{c3}{2},3} & \ghost{\colorbox{c3}{2},4}& \multigate{1}{\colorbox{c4}{1},5} & \ghost{\colorbox{c4}{1},7}& \multigate{1}{\colorbox{c5}{3},7} & \ghost{\colorbox{c5}{3},6}& \multigate{1}{\colorbox{c6}{5},6} & \qw &5\\
\lstick{{2 }    } & \ghost{\colorbox{c3}{2},3}&\multigate{1}{\colorbox{c4}{1},3} & \ghost{\colorbox{c4}{1},5}& \multigate{1}{\colorbox{c5}{3},5} & \ghost{\colorbox{c5}{3},7}&\multigate{1}{\colorbox{c6}{5},7} & \ghost{\colorbox{c6}{5},6}& \qw &6\\
\lstick{{1 }    } & \qw & \ghost{\colorbox{c4}{1},3}& \qw & \ghost{\colorbox{c5}{3},5}& \qw & \ghost{\colorbox{c6}{5},7}& \qw & \qw &7\\
\\ }}
\]
\caption{Example labeling and layer ordering. For $n=7$, $P_n = \{6,4,2,1,3,5\}$, which is the index shared by the six diagonal layers from left to right, with each layer highlighted in a different color. }
    \label{fig:Labeling}
\end{figure}

Due to the above observations, boxes in the same bottom-left to top-right diagonal layers share a common first index, as illustrated in \fig{Labeling}.  We label each diagonal layer by the index they share, which gives rise to the following sequence,
\begin{displaymath}
P_n:=\left\{
\begin{array}{ll}
    (n{-}1,n{-}3,...,1,2,4,...,n{-}2) & \text{for even }n\\
    (n{-}1,n{-}3,...,2,1,3,...,n{-}2) & \text{for odd }n
\end{array} \right..
\end{displaymath}
This labeling is related to $P_j$ labeling in \cite{maslov2018shorter}.  For two distinct boxes, $box(i,j)$ and $box(k,l)$, where $i{<}j$ and $k{<}l$, we write $box(i,j)\,{\prec}\, box(k,l)$ when $i$ precedes $k$ in $P_n$, and $box(i,j) \,{=}\, box(k,l)$ when $i{=}k$.

Intuitively, boxes are ordered by the diagonal layers they belong to, from left to right. As the content of a qubit travels from left to right, it either stays in the same layer (when the qubit is shuffled up) or goes to the next layer (when the qubit is shuffled down or is left out of a round of shuffling). The layer ordering gives us an important guarantee that aids in proofs that follow:
\begin{prop}\label{propos:order}
For each $k$ and $i{<}j$, $box(j,k)\,{\prec}\, box(i,j)$ implies $box(i,k)\,{\preceq}\, box(i,j)$.
\end{prop}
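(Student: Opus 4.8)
The plan is to reduce the whole statement to comparing a single integer attached to each box. By the definitions preceding the proposition, the order $\prec$ only sees first indices: $box(a,b)$ with $a<b$ is ranked by the position of its smaller label $a$ in the sequence $P_n$, and two boxes are declared $\prec$-equal precisely when these smaller labels coincide. So first I would record the elementary bookkeeping fact that under the symmetric notation $box(a,b)$ the relevant quantity is the first index $\min(a,b)$, and introduce $\pi(x)$ for the position of $x$ in $P_n$. Then $box(a,b)\preceq box(c,d)$ is equivalent to $\pi(\min(a,b))\le\pi(\min(c,d))$, with strict $\prec$ corresponding to a strict inequality. After this translation the claim becomes purely a statement about the labels $i<j$, $k$ and their pairwise minima.

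Next I would unpack the hypothesis. Since $box(i,j)$ has first index $i$, the assumption $box(j,k)\prec box(i,j)$ reads $\pi(\min(j,k))<\pi(i)$; in particular $\min(j,k)\ne i$, forcing $k\ne i$, while $k\ne j$ is needed for $box(j,k)$ to be defined. Hence $k$ occupies exactly one of the three positions $k>j$, $i<k<j$, or $k<i$ relative to the fixed pair $i<j$.

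The core of the argument is then a short case split on the location of $k$. If $k>i$ (that is, either $i<k<j$ or $k>j$), then $\min(i,k)=i$, so $box(i,k)$ and $box(i,j)$ share the first index $i$ and are therefore $\prec$-equal; thus $box(i,k)\preceq box(i,j)$ holds outright, without even invoking the hypothesis. If instead $k<i$, then $\min(j,k)=k=\min(i,k)$, so the hypothesis literally says $\pi(k)<\pi(i)$, which is exactly $box(i,k)\prec box(i,j)$ and hence $\preceq$. Either way the conclusion follows, completing the proof.

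I do not anticipate a genuine difficulty here; the statement is essentially structural once the order is translated into positions in $P_n$. The only point requiring care is the bookkeeping under the symmetric box notation---namely, that the layer, and hence the $\prec$-rank, of $box(a,b)$ is governed by the minimum of its two labels, and that coincidence of first indices yields equality rather than strictness in the preorder. Having pinned these conventions down, I would note as a sanity check that the hypothesis is actually used only in the sub-case $k<i$, the remaining cases being forced by the layer structure alone.
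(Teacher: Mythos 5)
Your proof is correct and is essentially the paper's argument: both rest on the observation that the $\prec$-rank of a box is determined by the position of its smaller label in $P_n$, together with the same case split on whether $k>i$ (layer equality, so $\preceq$ holds trivially) or $k<i$ (where the hypothesis transfers directly). The only cosmetic difference is that the paper states this as a contrapositive ($box(i,k)\succ box(i,j)$ forces $k<i$ and hence $box(j,k)\succ box(i,j)$), whereas you argue the implication directly.
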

\begin{proof}
We can show that the contrapositive is true: $box(i,k) \succ box(i,j)$ implies $box(i,k)$ is ordered by $k$ (i.e. $k{<}i{<}j$) and $i$ precedes $k$ in $P_n$ implies $box(k,j)$ is also ordered by $k$, $box(k,j)\succ box(i,j)$. 

The validity of the proposition may also be established intuitively by tracking the qubits that the qubit $x_j$ comes in contact with. At $box(i,j)$, which is located in layer $i$, a linear function containing $x_j$ comes in contact with a linear function containing $x_i$. Suppose $x_j$ was also brought in contact with $x_k$ through $box(j,k)$ before layer $i$. If $k{>}i$, then $box(i,k)$ is in the layer $i$. Otherwise, if $k{<}i{<}j$, then both $box(i,k)$ and $box(j,k)$ are in the layer $k$, which comes before layer $i$ since $box(j,k)$ comes before $box(i,j)$. Putting them together, we have that $x_k$ and $x_i$ meet either in the layer $i$ or in some layer before.
\end{proof}

\subsection{Inducing Arbitrary $\czgate$ Circuits} \label{ssec:CZ} 

We now have enough vocabulary to prove \thm{1}. An outline of the proof is as follows: we show that a northwest-triangular matrix diagonalization circuit generates a set of linear functions of primary variables that forms a basis in the CZ circuit space by induction on $k$, the number of $\swapgate^+$ boxes it contains.  Then, in \ssec{Algo} we describe an algorithm to find a schedule of Phase gates that induces a $\czgate$ circuit in $O(n^3)$ time, where $n$ is the number of qubits.

We focus on the northwest-triangular matrix diagonalization circuit $C$.
%, with corresponding northwest-triangular matrix $M$, where $c_i$, the $i^{th}$ column of $M$, denotes the initial content of qubit $i$. With a slight abuse of notation, we also use $c_i$ to denote the linear function of primary variables initially stored in qubit $i$. Let $k$ be the number of $\swapgate^+$ boxes in $C$. 
We first deal with the base case $k{=}0$.

\begin{lemma}\label{lem:base}
Let $C$ be a northwest-triangular diagonalization circuit consisting of only $\swapgate$ gates. Then, $C$ generates the basis $\{x_i\oplus x_j | 1{\leq}i{<}j{\leq} n\}$ in the space of $\czgate$ circuits.
\end{lemma}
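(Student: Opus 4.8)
The plan is to produce, for each box of the all-$\swapgate$ circuit $C$, one internal linear function whose associated $\czgate$-vector is a distinct standard basis element, and then to argue that ranging over all boxes recovers the entire standard basis $\{\czgate(i,j)\mid 1\le i<j\le n\}$ of the $\czgate$-circuit space. The key realization guiding the whole argument is that the functions carried \emph{between} boxes are useless here, so the useful functions must be harvested from \emph{inside} the boxes.

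First I would observe that when every box of $C$ is a plain $\swapgate$, the circuit realizes a pure permutation (the order reversal), so on every wire segment between boxes the stored value is a single primary variable. By the labeling convention of \ssec{CNOT}, the label of a wire equals the largest index of the primary variable it carries; since here each wire carries exactly one primary variable, the wire labeled $\ell$ carries precisely $x_\ell$. Consequently a box labeled $(i,j)$ with $i<j$ acts on the two inputs $x_i$ and $x_j$. Note that as vectors in $\czgate$-space these inter-box single-variable contents all vanish, which is why they cannot be used directly.

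Next I would use the fact that a $\swapgate$ box is realized by $\cnotgate$ gates, and that in this realization the linear function $x_i\oplus x_j$ appears on an internal wire segment: starting from $(x_i,x_j)$ the $\cnotgate$ decomposition passes through the intermediate content $x_i\oplus x_j$ before reaching $(x_j,x_i)$. A $\pgate$ gate may therefore be inserted on this internal segment at no cost to the two-qubit depth. By the background correspondence, inserting a $\pgate$ on $x_i\oplus x_j$ induces, up to a harmless layer of $\pgate$ gates on the primary variables (which are trivial in $\czgate$-space and absorbable into the \mbox{-P-} stage), exactly the gate $\czgate(x_i,x_j)$, i.e.\ the standard basis vector $e_{ij}$. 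I would then invoke the structural fact from \ssec{CNOT} that the boxes of the sorting network are in bijection with the unordered pairs of labels, each pair $\{i,j\}$ being swapped by exactly one box. Ranging over all boxes thus yields precisely the collection $\{x_i\oplus x_j\mid 1\le i<j\le n\}$, whose $\czgate$-vectors form the full standard basis $\{e_{ij}\}$ of $\mathbb{F}_2^{(n-1)n/2}$; being the standard basis, these are manifestly independent and spanning, so $C$ generates the claimed basis.

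The one point that must be gotten right, and the main conceptual obstacle, is the identification of the correct linear functions: a naive reading would look at the wire contents of $C$ and find only single primary variables, which generate nothing in $\czgate$-space. The fix is to exploit the intermediate function $x_i\oplus x_j$ exposed inside each box's $\cnotgate$ decomposition, which is exactly where a free $\pgate$ insertion lives. Once this is seen, the box-to-pair bijection finishes the proof, and neither the seven-term identity \eq{7Term} nor \propos{order} is required at $k=0$; those tools will only be needed for the inductive step where $\swapgate^+$ boxes force the exposed functions to span more than two variables.
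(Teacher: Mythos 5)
Your proposal is correct and takes essentially the same route as the paper: the paper omits a detailed proof, justifying the lemma by precisely the two facts you elaborate, namely that a $\swapgate$ implemented by $\cnotgate$ gates exposes the EXOR $x_i\oplus x_j$ of its inputs on an internal wire segment, and that in the sorting network each pair of labels meets in exactly one box. Your additional observations (that inter-box contents are single primary variables and hence trivial as $\czgate$-space vectors, and that a Phase gate on $x_i\oplus x_j$ induces $\czgate(x_i,x_j)$ up to phases on primary variables) simply spell out the details the paper delegates to its background section.
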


We omit the proof, since it has been addressed in \sec{background} and overall is easy to obtain by noting that the $\swapgate$ generates EXOR of the input qubits when implemented by the $\cnotgate$ gates, and that in the sorting network each qubit $i$ meets every other qubit $j$ exactly once.

Assume $C$ has $k{>}0$ $\swapgate^+$ boxes. Consider the leftmost $\swapgate^+$ in $C$ in the smallest layer by its order.  Let us refer to this box as $box(i,j)$, where $i$ and $j$ are its labels, $i{<}j$. If we replace $box(i,j)$ with a $\swapgate$, we obtain another circuit $C'$ with the corresponding matrix $M'$ that has $k{-}1$ $\swapgate^+$ boxes.  By the induction hypothesis, $C'$ contains a $\czgate$ basis.  An example of $C$ and $C'$ is shown in \fig{induction}. 

\begin{figure}[h]
    \centering
    \includegraphics[width=\textwidth]{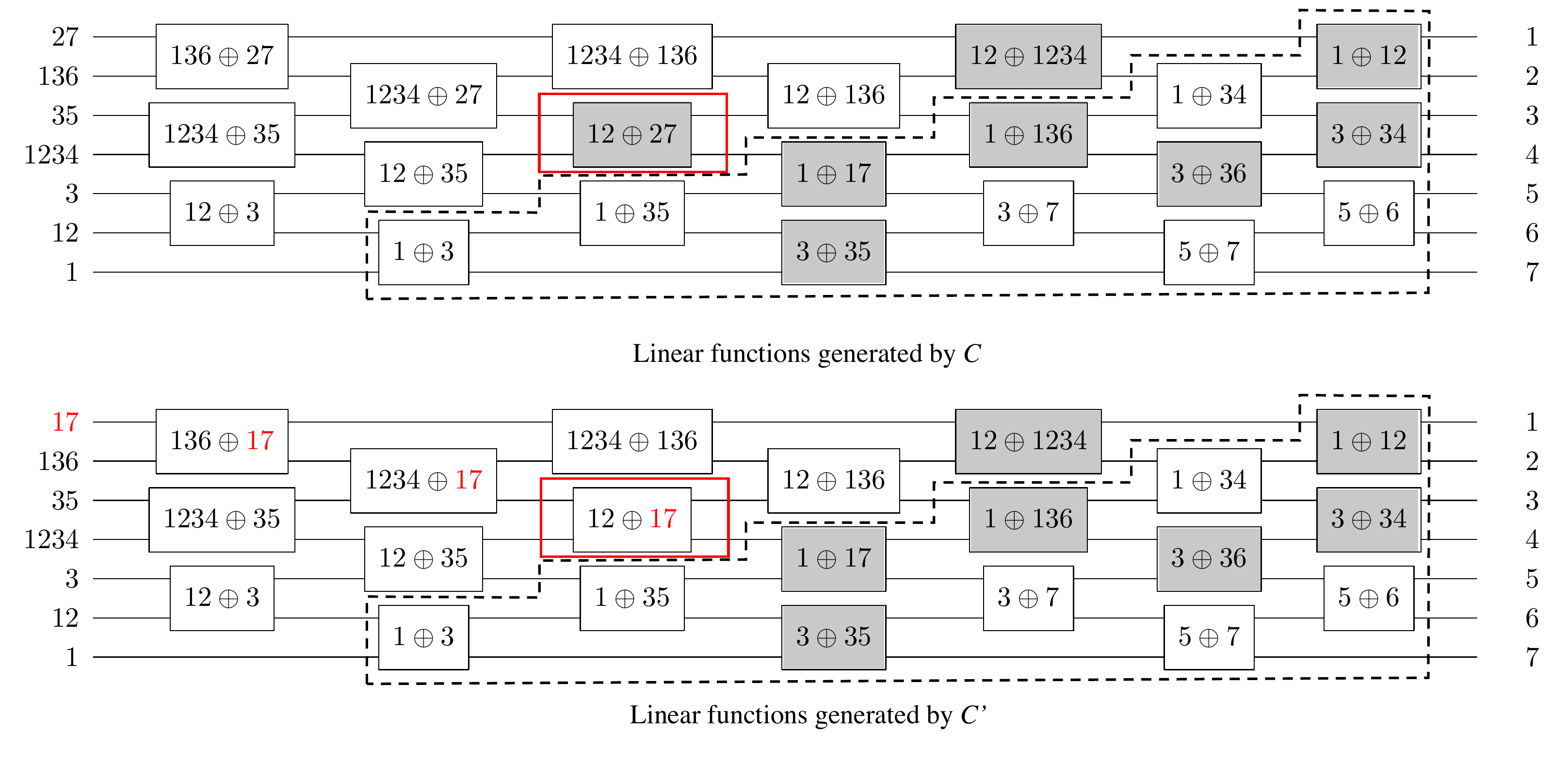}
    \caption{For simplicity, qubit concatenation refers to (output) qubit label EXORs. The shaded box represents a $\swapgate^+$ gate, whereas the unshaded box represents a $\swapgate$ gate.  Leftmost $\swapgate^+$s in the smallest order layer labeled by $(2,7)$ is enclosed by the red box---this is the one we focus on in the induction.  All linear functions generated by the circuit after $box(2,7)$ in higher layer order are enclosed by the dashed lines; they are identical in $C$ and $C'$.  All linear functions generated by boxes can be written as the EXORs of two inputs.  The differences between linear functions computed by $C$ and $C'$ are highlighted in red.}
    \label{fig:induction}
\end{figure}

We make two observations. First, the subcircuits following $box(i,j)$ compute identical linear functions between $C$ and $C'$. Also, all boxes $box(k,l)$ such that $box(k,l)\,{\succ}\, box(i,j)$ generate the same linear functions in both $C$ and $C'$ (for illustration, see dashed boxes in \fig{induction}). Second, the circuit preceding $box(i,j)$ consists of only $\swapgate$ gates, and as such it only swaps the linear functions without introducing new ones. For all boxes $box(k,l)\preceq box(i,j)$, $box(k,l)$ generates $c_l\oplus c_k$ in $C$ and $c'_l\oplus c'_k$ in $C'$. We know from the first observation that the linear function stored in each qubit after layer $i$ must be identical between $C$ and $C'$. Matching the outputs of the boxes on layer $i$, we can show that all but one initial input of $C$ and $C'$ are identical; that is, $c_k\,{=}\,c'_k$ for all $k\,{\neq}\, j$, and $c'_j = c_i\oplus c_j$ (or equivalently $c_j = c'_i\oplus c'_j$ ). We formalize these observations in the following Lemma:

\begin{lemma}\label{lem:dif}
Let $C$ be a modification of the sorting network with $k{>}0$ $\swapgate^+$ boxes, and let $C'$ be the circuit obtained from it by replacing the leftmost and first in layer order $\swapgate^+$ box, labeled with $(i,j)$, with a $\swapgate$. Let $c_i$ and $c'_i$ be the initial inputs of circuits $C$ and $C'$ respectively. Then:
\begin{enumerate}
    \item $c_i$ differs from $c'_i$ in exactly one input. In particular, $c_j = c'_j\oplus c'_i$, and $c_k{=} c'_k$ for all $k{\neq }j$.
    \item At most $n{-}1$ boxes generate different linear functions between $C$ and $C'$. In particular, $box(k,j)$ generates $c_k\oplus c_j$ in $C$ and $c'_k \oplus c'_j = c_k\oplus c_i\oplus c_j$ in $C'$, if and only if $box(k,j)\preceq box(i,j)$
\end{enumerate}
\end{lemma}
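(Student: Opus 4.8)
The plan is to track how a single localized change---replacing the $\swapgate^+$ labeled $(i,j)$ by a $\swapgate$---propagates through the network, exploiting the fact recorded in \ssec{CNOT} that content flows through the diagonal layers in the order given by $P_n$. I would first pin down the orientation of $box(i,j)$: since a box sends the larger label downward, immediately to the left of $box(i,j)$ the top wire carries label $j$ (content $c_j$ in $C$, $c'_j$ in $C'$) and the bottom wire carries label $i$ (content $c_i$, resp.\ $c'_i$). Writing out the two gate actions, the $\swapgate^+$ of $C$ emits $(c_i,\ c_i\oplus c_j)$ on (top, bottom), while the $\swapgate$ of $C'$ emits $(c'_i,\ c'_j)$.

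For Part 1 I would invoke the first observation: $C$ and $C'$ have identical outputs (both diagonalize to the primary variables, and $\swapgate$ and $\swapgate^+$ permute labels identically), and the two circuits are literally equal to the right of $box(i,j)$, so running backward from the outputs forces the contents on every wire immediately after $box(i,j)$ to agree. Equating the two output pairs above gives $c'_i=c_i$ and $c'_j=c_i\oplus c_j$, i.e.\ $c_j=c'_j\oplus c'_i$. For any label $k\neq i,j$ the corresponding wire is untouched by $box(i,j)$ and agrees to the right, so tracing it back through the identical, all-$\swapgate$ left subcircuit (which, as in \lem{base}, only permutes inputs) yields $c_k=c'_k$; together with $c'_i=c_i$ this is exactly $c_k=c'_k$ for all $k\neq j$.

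For Part 2 the clean packaging is a ``difference field'' $\delta$ on each wire segment, equal to the XOR of the $C$- and $C'$-content there. By Part 1, $\delta$ is initially $c_i$ on the input wire labeled $j$ and $0$ elsewhere; a $\swapgate$ merely transports $\delta$ along with label $j$ and never spreads it (everything left of $box(i,j)$ is a $\swapgate$, and a swap does not combine wires), and the computation above shows $\delta$ vanishes on both outputs of $box(i,j)$. Hence $\delta$ is supported exactly on the label-$j$ segments at or before $box(i,j)$. A box emits a function that differs between $C$ and $C'$ precisely when one of its inputs carries nonzero $\delta$, i.e.\ when it meets label $j$ no later than $box(i,j)$ along the trajectory of $j$. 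Using that content visits layers in nondecreasing $P_n$-order, this is equivalent to $box(k,j)\preceq box(i,j)$: a box strictly later in the order is causally after $box(i,j)$ and sees $\delta=0$, whereas a box $\preceq box(i,j)$ still receives the uncombined initial values $c_j$ and $c_k$, so it emits $c_k\oplus c_j$ in $C$ and $c'_k\oplus c'_j=c_k\oplus c_i\oplus c_j$ in $C'$. Since label $j$ meets every other label exactly once, there are at most $n-1$ such boxes.

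The main obstacle is turning the intuitive ``contaminated trajectory'' picture into the precise $\preceq$ statement, and in particular verifying that $box(i,j)$ is the unique place where the difference is absorbed. The delicate point is that $box(i,j)$ is itself a $\swapgate^+$, so one must confirm it does not contaminate any box $\preceq box(i,j)$: its top (label $i$) output is the clean value $c_i$, while its bottom (label $j$) output $c_i\oplus c_j$ flows only into strictly later layers. Combined with the monotone layer-flow and the structural ordering facts of \ssec{CNOT} (together with the distinctness of the entries of $P_n$, so that $\min(k,j)=i$ forces $k=i$, with \propos{order} available to control how $j$'s trajectory meets earlier layers), this shows every box at or before $box(i,j)$ is fed uncombined initial inputs, which is what makes both the exact generated-function values and the equivalence with $\preceq$ go through.
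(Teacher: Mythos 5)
Your proof is correct and takes essentially the same route as the paper's, which likewise rests on (i) the suffix after $box(i,j)$ being identical in $C$ and $C'$ with equal outputs, (ii) the prefix feeding $box(i,j)$ consisting only of $\swapgate$s, and (iii) matching the two possible outputs of $box(i,j)$ itself to extract $c'_i=c_i$ and $c'_j=c_i\oplus c_j$; your difference field $\delta$ is simply a clean formalization of the paper's two observations. The only caution is that ``everything left of $box(i,j)$ is a $\swapgate$'' must be read in the causal/layer-order sense rather than literal circuit position (a $\swapgate^+$ belonging to a later diagonal layer can sit at an earlier time step), a subtlety that both the paper's own wording glosses over and that your closing paragraph correctly flags and repairs via the monotone layer-flow property.
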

 
We next show that we can always express the applications of Phase gates to linear functions generated by $C$ that are different from those generated by $C'$ using a constant number of Phase gates applied elsewhere by employing the identity \eq{7Term}.

\begin{lemma}\label{lem:replace}
Let $C$ and $C'$ be given as above, where $c_i$ and $c'_i$ denote input linear functions and $box(i, j)$, $i{<}j$ denotes the leftmost first in order $\swapgate^+$ box in $C$. For each Phase gate applied inside $C'$, there exists a corresponding schedule of either $1$ or $6$ Phase gates found in $C$ that computes this phase.  We give these schedules explicitly: 
\begin{enumerate}
    \item A $\pgate$ gate applied to the unique $c'_j$ in $C'$ such that $c'_j{\neq}c_j$ is equivalent to the $\pgate$ gate applied inside $box(i,j)$ in $C$.
    \item (``inversely'' to the above) A $\pgate$ gate applied inside $box(i,j)$ in $C'$ is equivalent to the $\pgate$ gate applied to $c_j$ in $C$.
    \item For $box(k,j){\prec} box(i,j)$, a $\pgate$ gate applied inside $box(k,j)$ in $C'$ is equivalent to a schedule of $\pgate^\dagger$ gates applied to/in $c_i,\,c_j,\,c_k,\,box(i,j),\,box(i,k),$ and $box(j,k)$ in $C$.
    \item A $\pgate$ gate applied to any other location in $C'$ is equivalent to the $\pgate$ applied to the same location in $C$.
\end{enumerate}
\end{lemma}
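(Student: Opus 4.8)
The plan is to treat the four enumerated cases separately, reducing everything to the handful of linear functions that actually change when the chosen box is flipped. By \lem{dif}, the only inputs and box outputs that differ between $C$ and $C'$ are the input on wire $j$ and the outputs of the boxes $box(k,j)\preceq box(i,j)$; every other linear function, and hence every other Phase-insertion point, is literally identical in the two circuits. Case~4 is then immediate: a $\pgate$ at an unaffected location of $C'$ carries exactly the same linear function as the corresponding location of $C$, so the same gate reproduces the same phase.

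Cases 1 and 2 are dual single-gate substitutions that follow directly from the input relations $c'_j = c_i\oplus c_j$ and $c'_i = c_i$ of \lem{dif}. In $C$ the box $box(i,j)$ is a $\swapgate^+$ and generates $c_i\oplus c_j = c'_j$, so a $\pgate$ on the shifted input $c'_j$ of $C'$ is reproduced by a $\pgate$ inside $box(i,j)$ of $C$. Symmetrically, in $C'$ the same box is a plain $\swapgate$ whose generated EXOR is $c'_i\oplus c'_j = c_j$, so a $\pgate$ inside $box(i,j)$ of $C'$ is reproduced by a $\pgate$ on the input $c_j$ of $C$. In each case the two locations carry an identical linear function, so a single Phase gate suffices.

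The crux is case~3. Here $box(k,j)\prec box(i,j)$ generates $c_i\oplus c_j\oplus c_k$ in $C'$ (by \lem{dif}) but only $c_j\oplus c_k$ in $C$, so no single location of $C$ carries the required function. I would apply the seven-term identity \eq{7Term} with $a=c_i$, $b=c_j$, $c=c_k$ to rewrite the phase that $\pgate$ places on $c_i\oplus c_j\oplus c_k$ as a combination of six phases on $c_i$, $c_j$, $c_k$, $c_i\oplus c_j$, $c_i\oplus c_k$, and $c_j\oplus c_k$. The three singletons are available as inputs of $C$, while the three pairwise EXORs are exactly the outputs of $box(i,j)$, $box(i,k)$, and $box(j,k)$. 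To legitimize this I would invoke \propos{order}: from $box(k,j)\prec box(i,j)$ it yields $box(i,k)\preceq box(i,j)$, placing $box(i,k)$ (and, by the case hypothesis, $box(j,k)$) within the prefix of $C$ in which only $\swapgate$ boxes have acted, so these boxes combine the unmodified inputs and genuinely generate $c_i\oplus c_k$ and $c_j\oplus c_k$.

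The main obstacle I anticipate is precisely this last point: certifying that $box(i,k)$ and $box(j,k)$ produce the clean pairwise EXORs rather than contaminated variants. This is where the choice of $box(i,j)$ as the leftmost, first-in-order $\swapgate^+$ is essential, since it guarantees that nothing before $box(i,j)$ alters a linear function, and \propos{order} guarantees the two auxiliary boxes lie in that clean region. A secondary, purely bookkeeping point is the $\pgate$ versus $\pgate^\dagger$ signs produced by \eq{7Term}: because a Phase gate and its inverse applied to the same function induce the same $\czgate$ content---they differ only by powers of $i$ on primary variables, which are absorbed into the free single-qubit $\pgate$ layer---the recorded six-gate schedule reproduces the required $\czgate$ vector exactly, which is all that the eventual basis argument underlying \thm{1} needs.
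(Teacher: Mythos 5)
Your proposal is correct and takes essentially the same route as the paper's own proof: the same four-case analysis driven by \lem{dif}, the seven-term identity \eq{7Term} instantiated with $a=c_i$, $b=c_j$, $c=c_k$ for case~3, and \propos{order} invoked to certify that $box(i,k)$ (together with $box(j,k)$ and $box(i,j)$) lies on or before layer $i$, where only $\swapgate$ boxes precede, so the six required linear functions $c_i$, $c_j$, $c_k$, $c_i{\oplus}c_j$, $c_i{\oplus}c_k$, $c_j{\oplus}c_k$ are genuinely available in $C$. Your bookkeeping of the $\pgate$ versus $\pgate^\dagger$ signs---noting they induce the same $\czgate$ vector up to single-qubit phases on primary variables---is equivalent to the paper's remark that $\pgate^\dagger=\pgate\zgate$ with the Pauli-$\zgate$ pushed to the side of the circuit.
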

\begin{proof}
We begin by pointing out that $box(i,j)$, $box(j,k)$, and $box(i,k)$ are located on or before the layer $i$. This is clearly the case for $box(i,j)$ and $box(j,k)$ by construction. $box(i,k)\preceq box(i,j)$ as shown in \propos{order}. The linear functions generated by those boxes are the EXORs of two inputs.

First two cases follow from \lem{dif}. First, $c'_j = c_i\oplus c_j$, and $c_i\oplus c_j$ is generated by the $box(i,j)$ in $C$. The linear function generated by $box(i,j)$ in $C'$ is $c'_i\oplus c'_j = c_j$, and $c_j$ is an initial input to $C$. 

In case 3, $box(k,j)$ generates $c'_k\oplus c'_j$ in $C'$.  We can rewrite the application of the Phase gate to this linear function using the identity \eq{7Term} as follows:
\begin{equation*}
    c'_j\oplus c'_k = (c_i\oplus c_j)\oplus c_k = c_j{\oplus}c_k + c_i{\oplus}c_k + c_i{\oplus}c_j - c_i - c_j - c_k (\bmod 4)
\end{equation*}
The application of the $\pgate$ gates to the linear function in the circuit $c'$ on the right hand side of the equation is thus generated by the $\pgate$ and $\pgate^\dagger$ gates applied to the linear functions $c_i$, $c_j$, $c_k$, $box(i,j)$, $box(j,k)$, and $box(i,k)$ in the circuit $C$. We note in passing that this involves the introduction of the $P^\dagger$ gates, which can be rewritten as $P^\dagger = \pgate\zgate$ and Pauli-$\zgate$ can be moved to the side of a Clifford circuit, or better yet by allowing to multiply the identity in \eq{7Term} by modulo-4 integers, and thus allowing to treat arbitrary powers of the $\pgate$ gate.

Finally, case 4 also follows from \lem{dif}, since all other linear functions generated are equal in $C$ and $C'$.
\end{proof}

\subsection{Algorithm and Runtime Analysis} \label{ssec:Algo} 

\begin{algorithm}[h]
\caption{\texttt{InitializeS}}\label{alg:pcz}
\begin{algorithmic}
\State \textbf{Input:} -P-CZ-
\State \textbf{Initialize:} $S\gets n\times n$ zero matrix
\FOR{$P^k(i)\in$ -P-}
    \State $S[i,i]\gets k$
    \ENDFOR
\FOR{$CZ(i,j)\in$ -CZ-} 
     \State $(i,j)\gets (\min(i,j),\max(i,j))$
     \State $S[i,j]\gets 3$
     \State $S[i,i]\gets (S[i,i] + 1)\% 4$
     \State $S[j,j]\gets (S[j,j] + 1)\% 4$
    \ENDFOR
\State\textbf{Return:} $S$
\end{algorithmic}
\end{algorithm}

Finally, we discuss the algorithm that offers the construction of a depth-$5n$ implementation of arbitrary Hadamard-free Clifford circuit in time $O(n^3)$. For simplicity, we assume that the -P-CZ-CNOT- decomposition \cite{maslov2018shorter} and the depth-$5n$ implementation of the -CNOT- layer \cite{kutin2007computation} are given.  The algorithm starts by considering the $\swapgate^+$-free case of the northwest-triangular matrix diagonalization circuit and sets the Phases according to \lem{base}, see \alg{pcz}.  This part has complexity $O(n^2)$.  The Phase schedule is then updated iteratively, based on the proof of \thm{1}, as shown in the \alg{HfreeSynthesis}. This part has complexity $O(n^3)$, since adding one $\swapgate^+$ box requires updating at most $O(n)$ Phase schedules $S[i,j]$, with each update taking constant effort, and there can be at most $\frac{(n-1)n}{2}=O(n^2)$ $\swapgate^+$ boxes in the northwest-triangular matrix diagonalization circuit.  The overall complexity is thus $O(n^3)$.

\begin{algorithm}[h]
\caption{\texttt{FindPhaseSchedule}}\label{alg:HfreeSynthesis}
\begin{algorithmic}
\State \textbf{Input:} -P-CZ-CNOT-
\State $C\gets$ \texttt{InitializeC}(-CNOT-)
\State $S\gets$ \texttt{InitializeS}(-P-CZ-)

\State \texttt{\slash* Now, we iteratively add SWAP+ boxes in descending layer order *\slash}
\State \texttt{\slash* and update $S$ *\slash}

\FOR{$Box(i,j)\in C.\texttt{GetSWAP+}$} 
    \State \texttt{\slash* Case 1 \& 2 - Switch the $\pgate$ gates applied to box(i,j) and $c_j$ *\slash}
    \State $(S[j,j], S[i,j])\gets (S[i,j], S[j,j])$
    \FOR{$\{k=1$, $k\leq n$, $k\gets k+1\}$}
        \IF{$Box(k,j)\prec Box(i,j)$}{
            \State \texttt{\slash* Case 3 - Replace the $\pgate$ gates applied to box(k,j) with 6 $\pgate$ (or $\pgate^\dagger$) gates *\slash}
            \State $p_k\gets S[k,j]$
            \FOR{$(w_1,w_2)\in \{(i,i),(j,j),(k,k)\}$}    
                \State $S[w_1,w_2]\gets (S[w_1,w_2] + 3p_k)\%4$
                \ENDFOR
                % Since we're using a different version of the identity 
            \FOR{$(w_1,w_2)\in \{(i,j),(i,k)\}$}    
                \State $S[w_1,w_2]\gets (S[w_1,w_2] + p_k)\%4$
                \ENDFOR
        \ENDIF
        }
        \ENDFOR
    \ENDFOR
\State \textbf{Return:} $S$
 
\end{algorithmic}
\end{algorithm}

\subsection{Extension to Clifford circuits}\label{ssec:cliff}

The ability to implement -CNOT-CZ- circuits in depth $5n$ over the LNN architecture coupled with the ability to implement -CZ- circuits in depth $2n{+}2$ \cite{maslov2018shorter} directly implies the ability to implement arbitrary Clifford circuits in depth $5n + 2n{+}2 = 7n{+}2$ by relying on the decomposition -X-Z-P-CX-CZ-H-CZ-H-P- \cite[Lemma 8]{bravyi2021hadamard}. However, by considering local optimizations, one may reduce this figure to $7n{-}4$.

\begin{lemma}
Any Clifford transformation over $n$ qubits can be implemented over the LNN architecture as a circuit with the two-qubit gate depth of at most $7n{-}4$.   
\end{lemma}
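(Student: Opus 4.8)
The plan is to build the circuit directly from the nine-stage decomposition -X-Z-P-CNOT-CZ-H-CZ-H-P- of \cite[Lemma 8]{bravyi2021hadamard}. Since the $\xgate$, $\zgate$, $\pgate$, and $\hgate$ stages consist entirely of single-qubit gates, they contribute nothing to the two-qubit gate depth, so only the three entangling stages CNOT, CZ, CZ matter. First I would group the adjacent -CNOT-CZ- block and invoke \thm{1} to realize it over LNN in depth at most $5n$. The remaining, isolated -CZ- stage I would implement in depth $2n{+}2$ using \cite{maslov2018shorter}. As in the $9n$ construction recalled in \sec{Intro}, each -CZ- stage is realized as $\widehat{\text{-CZ-}}$, i.e., a -CZ- together with a qubit reversal; because $\hgate$ acts identically on every qubit it commutes with the reversal, so the two reversals introduced by the two CZ stages cancel across the intervening $\hgate$ layer and correctness is preserved. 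Composing the two blocks gives depth $5n+(2n{+}2)=7n{+}2$.

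To reach $7n{-}4$ I would then apply local optimizations at the interface between the depth-$5n$ -CNOT-CZ- block and the depth-$(2n{+}2)$ -CZ- block. The key observation is that both circuits inherit the tapered, brick-wall shape of the sorting network of \cite{kutin2007computation}: near their boundaries only a proper subset of the qubit line is active in each layer, leaving gaps. Moreover, at this interface the gates being matched are of CZ/phase type, and CZ gates pairwise commute (as established in \sec{background}), so the terminal partial layers of one block can be slid into the gaps of the initial partial layers of the other without collision. I would make the layer bookkeeping explicit, tracking which of the $n$ alternating layers at each boundary are partial, and show that fusing these partial two-qubit layers at the interface (with single-qubit gates freely commuted, since they do not contribute to two-qubit depth) removes exactly $6$ layers, yielding depth $7n{-}4$.

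The main obstacle I anticipate is the boundary accounting itself. The depth-$5n$ circuit of \thm{1} is $(C_1C_2)^{-1}$ with Phase gates inserted to carry the -CZ- part ``for free,'' so I must verify that these inserted Phase gates (which are single-qubit and tied to specific linear functions) do not obstruct the layer fusion, and that the qubit reversal arising from the sorting network lines up with the reversal of the appended $\widehat{\text{-CZ-}}$ so that the fused layers are genuinely adjacent. Confirming that the count is precisely $6$ saved layers, rather than fewer, and doing so uniformly in $n$---and for both parities of $n$, since the sequence $P_n$ differs in the even and odd cases---is the delicate part; everything else is routine assembly of the already-established building blocks.
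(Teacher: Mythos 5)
Your assembly of the baseline is the same as the paper's: the decomposition of \cite[Lemma 8]{bravyi2021hadamard}, \thm{1} for the joint -CNOT-CZ- block in depth $5n$, and the depth-$(2n{+}2)$ construction of \cite{maslov2018shorter} for the remaining $\widehat{\text{-CZ-}}$ stage, giving $7n{+}2$. But the mechanism you propose for recovering the missing $6$ layers does not work, for two reasons. First, the interface between the two blocks is not a CZ-on-CZ interface: it contains the full Hadamard layer -H$_1$- acting on every qubit, and $\hgate$ does not commute with the two-qubit gates on either side. Moreover, neither block actually terminates/begins in literal $\czgate$ gates --- the depth-$5n$ circuit of \thm{1} ends in $\swapgate$/$\swapgate^+$ boxes (i.e., $\cnotgate$-type gates), and the circuit of \cite{maslov2018shorter} is built from $\cnotgate$ and $\pgate$ gates, not $\czgate$ gates --- so the pairwise commutativity of $\czgate$ gates is not available as a sliding argument. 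Second, the premise that the two circuits are ``tapered'' near their boundaries is false: the sorting-network layers of \cite{kutin2007computation} and the layers of the \cite{maslov2018shorter} construction are full brick-wall layers covering essentially all qubits, so there are no gaps to slide partial layers into.

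The paper obtains the $6$ saved layers by two different, and genuinely non-routine, steps. (a) It pushes -H$_1$- through the \emph{first four} $\cnotgate$ layers of the $\widehat{\text{-CZ$_2$-}}$ implementation, flipping controls and targets; this is legitimate precisely because in the construction of \cite{maslov2018shorter} no Phase gates are needed before the end of the first depth-$4$ stage. The resulting depth-$4$ $\cnotgate$ block now sits to the left of -H$_1$- and is a linear reversible map, so it is absorbed into the linear part of the -CNOT-CZ- block, which is re-synthesized in depth $5n$ by \thm{1}; this saves $4$ layers. (b) The remaining $2$ layers come from an explicit case analysis at the interface: after step (a), the last three two-qubit stages of the depth-$5n$ circuit, the layer -H$_1$-, and the first two-qubit stage of the reduced $\widehat{\text{-CZ$_2$-}}$ all act on the same disjoint pairs $(1,2),(3,4),\dots$ (for odd $n$), and on each pair the resulting four-layer two-qubit circuit (SWAP or $\swapgate^+$ box with optional Phases, two Hadamards, Phases, $\cnotgate$) is shown by direct circuit identities to compress to depth $2$; even $n$ is handled by synthesizing the inverse of the -CX- stage and reversing the gate order. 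Without an argument of this kind --- one that explicitly confronts the Hadamard layer rather than commuting around it --- your claim of ``exactly $6$'' saved layers is unsupported, and the proof does not go through as written.
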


\begin{proof}
Recall that the decomposition -X-Z-P$_1\widehat{\text{-CZ$_1$-}}$CX-H$_1\widehat{\text{-CZ$_2$-}}$H$_2$-P$_2$- \cite{bravyi2021hadamard} (here we changed the order of -CX- and -CZ-, which is allowed, mark all stages with subscripts, and use $\widehat{\;\;}$ to denote operation together with the qubit reversal) features the stage -H$_1$- with the Hadamard gates applying to all qubits. 

We first express $\widehat{\text{-CZ$_2$-}}$ as a depth $2n{+}2$ circuit according to \cite{maslov2018shorter}.  The layer -H$_1$- and the first four layers of the implementation of $\widehat{\text{-CZ$_2$-}}$ can be rewritten by pushing Hadamards through to the right and flipping the controls and targets of the $\cnotgate$ gates, as follows (illustrated for $n{=}7$), 
\begin{equation*}
\Qcircuit @C=0.7em @R=0.7em @!R {
& \qw & \gate{\hgate} & \ctrl{1}  & \qw       & \targ     & \qw       & \qw\\
& \qw & \gate{\hgate} & \targ     & \targ     & \ctrl{-1} & \ctrl{1}  & \qw\\
& \qw & \gate{\hgate} & \ctrl{1}  & \ctrl{-1} & \targ     & \targ     & \qw\\
& \qw & \gate{\hgate} & \targ     & \targ     & \ctrl{-1} & \ctrl{1}  & \qw\\
& \qw & \gate{\hgate} & \ctrl{1}  & \ctrl{-1} & \targ     & \targ     & \qw\\
& \qw & \gate{\hgate} & \targ     & \targ     & \ctrl{-1} & \ctrl{1}  & \qw\\
& \qw & \gate{\hgate} & \qw       & \ctrl{-1} & \qw       & \targ     & \qw\\}
\hspace{1em}\raisebox{-6em}{=}\hspace{1em}
\Qcircuit @C=0.7em @R=0.7em @!R {
& \qw & \targ     & \qw       & \ctrl{1}  & \qw       & \gate{\hgate} & \qw\\
& \qw & \ctrl{-1} & \ctrl{1}  & \targ     & \targ     & \gate{\hgate} & \qw\\
& \qw & \targ     & \targ     & \ctrl{1}  & \ctrl{-1} & \gate{\hgate} & \qw\\
& \qw & \ctrl{-1} & \ctrl{1}  & \targ     & \targ     & \gate{\hgate} & \qw\\
& \qw & \targ     & \targ     & \ctrl{1}  & \ctrl{-1} & \gate{\hgate} & \qw\\
& \qw & \ctrl{-1} & \ctrl{1}  & \targ     & \targ     & \gate{\hgate} & \qw\\
& \qw & \qw       & \targ     & \qw       & \ctrl{-1} & \gate{\hgate} & \qw\\}
\end{equation*}
This is justified since Phase gates need not be applied until after the first depth-$4$ $S$ stage \cite{maslov2018shorter}.  This means that the depth-$4$ $\cnotgate$ layer can be merged into the -CX- layer to its left, offering savings of $4$ layers in depth compared to the baseline construction of depth $7n{+}2$.

We next assume that $n$ is odd and show how to reduce the depth by additional $2$ layers.  To this end, focus on the last three two-qubit gate stages in the decomposition of $\widehat{\text{-CZ$_1$-}}$CX- into the depth $5n$ circuit, the layer -H$_1$-, and the first two-qubit gate stage in the now-reduced decomposition of $\widehat{\text{-CZ$_2$-}}$.  Note that the gates in these four layers apply only to pairs of qubits $(1,2)$, $(3,4)$, ... . For each pair of qubits $(k, k{+}1)$, the last three two-qubit gate stages in the decomposition of $\widehat{\text{-CZ$_1$-}}$CX- are either SWAP or SWAP+, with Phase gate or without it; the -H$_1$- is always the layer of two Hadamards, and the first two-qubit gate stage of $\widehat{\text{-CZ$_2$-}}$ is a set of $\pgate$ gates followed by the $\cnotgate(k,k{+}1)$.  We group these cases into the following three and show that each can be implemented as a circuit of depth at most $2$ (here, $a,b,c,d,$ and $e$ take Boolean values to indicate if the given gate is present, $1$, on not, $0$).

\paragraph{Case 1: Last box is \text{SWAP}.} 
\begin{equation*}
\Qcircuit @C=1.0em @R=0.2em @!R { \\
	 	 & \ctrl{1} & \qw & \targ & \ctrl{1} & \gate{\mathrm{P}^b} & \gate{\mathrm{H}} & \gate{\mathrm{P}^d} & \ctrl{1} & \qw & \qw\\
	 	 & \targ & \gate{\mathrm{P}^a} & \ctrl{-1} & \targ & \gate{\mathrm{P}^c} & \gate{\mathrm{H}} & \gate{\mathrm{P}^e} & \targ & \qw & \qw\\}
\hspace{1em}\raisebox{-2.5em}{=}\hspace{1em}
\Qcircuit @C=1.0em @R=0.2em @!R { \\
	 	 & \gate{\mathrm{P}^c} & \targ & \gate{\mathrm{P}^a} & \gate{\mathrm{H}} & \gate{\mathrm{P}^e} & \targ & \gate{\mathrm{P}^d} & \qw & \qw\\
	 	& \gate{\mathrm{P}^b} & \ctrl{-1} & \qw & \gate{\mathrm{H}}  & \qw & \ctrl{-1} & \qw & \qw & \qw\\}
\end{equation*}

\paragraph{Case 2: Last box is \text{SWAP+} and $e{=}0$.}
\begin{equation*}
\Qcircuit @C=1.0em @R=0.2em @!R { \\
	 	 & \ctrl{1} & \qw & \targ & \gate{\mathrm{P}^b} & \gate{\mathrm{H}} & \gate{\mathrm{P}^d} & \ctrl{1} & \qw & \qw\\
	 	 & \targ & \gate{\mathrm{P}^a} & \ctrl{-1}  & \gate{\mathrm{P}^c} & \gate{\mathrm{H}} & \qw & \targ & \qw & \qw\\}
\hspace{1em}\raisebox{-2.5em}{=}\hspace{1em}
\Qcircuit @C=1.0em @R=0.2em @!R { \\
	 	 & \qw & \ctrl{1} & \qw & \gate{\mathrm{H}} & \gate{\mathrm{P}^d} &\qw\\
	 	& \gate{\mathrm{P}^b} & \targ & \gate{\mathrm{P}^{a+c}} & \gate{\mathrm{H}}  & \qw &\qw\\}
\end{equation*}

\paragraph{Case 3: Last box is \text{SWAP+} and $e{=}1$.}
\begin{equation*}
\Qcircuit @C=1.0em @R=0.2em @!R { \\
	 	 & \ctrl{1} & \qw & \targ & \gate{\mathrm{P}^b} & \gate{\mathrm{H}} & \gate{\mathrm{P}^d} & \ctrl{1} & \qw & \qw\\
	 	 & \targ & \gate{\mathrm{P}^a} & \ctrl{-1}  & \gate{\mathrm{P}^c} & \gate{\mathrm{H}} & \gate{\mathrm{P}} & \targ & \qw & \qw\\}
\hspace{1em}\raisebox{-2.5em}{=}\hspace{1em}
\Qcircuit @C=1.0em @R=0.2em @!R { \\
	 	 & \qw & \ctrl{1}  & \gate{\mathrm{H}} &\ctrl{1}&\qw& \gate{\mathrm{P}^{d+1}} &\qw\\
	 	 & \gate{\mathrm{P}^b} & \targ & \gate{\mathrm{P}^{a+c}} & \targ&\gate{\mathrm{H}}  & \gate{\mathrm{P}} &\qw\\}
\end{equation*}

This means that for odd $n$ a Clifford circuit can be implemented in depth $7n{-}4 = 7n{+}2 - 4 - 2$.  The case of even $n$ is handled similarly, except to enable the gate cancellations in the routine reducing the depth by $2$, we implement the inverse of -CX- and list the gates in inverted order.
\end{proof}

\section{-CNOT-CZ- circuits over all-to-all architecture} \label{sec:all-to-all}

We next turn our attention to the all-to-all connected architecture. There are a number of methods available to synthesize a $\cnotgate$ circuit, targeting various optimization criteria. Here, we consider two methods: MZ synthesis method that achieves the circuit depth $n+O(\log^2n)$ \cite{maslov2022depth}, which is currently the best known method for the number of qubits up to $1{,}345{,}000$, and the PMH method achieving asymptotically optimal gate count of $O(n^2/\log(n))$ \cite{patel2008optimal}. 

Our goal is to establish a $\czgate$ basis in the $\cnotgate$ circuits found by these algorithms, and when this is impossible, find the smallest number of $\czgate$ gate insertions that allows generating a $\czgate$ basis. We note that the basis must contain $\frac{(n-1)n}{2}$ functions, and thus for the algorithms showing sub-quadratic scaling such as the PMH \cite{patel2008optimal}, this will eventually become impossible without the addition of $\czgate$ gates.  However, given the high leading constant \cite{de2021reducing} and our experiments in \fig{gateOptimal}, we do not anticipate this to happen for reasonably small numbers $n$.  We also note that due to the input-dependant structure of the circuits generated by MZ and PMH algorithms, it is difficult to come up with proofs of performance guarantee.  We thus resort to the combination of heuristics and random tests.

\subsection{A Heuristic Algorithm} \label{subsec:HeuristicAlgo}

% Rewrite the algorithm without the definitions. e.g. z(f), l_E, etc
\begin{algorithm}[h]
\caption{\texttt{InsertCZ}}\label{alg:insertCZ}
\begin{algorithmic}
\State \textbf{Input:} -CNOT-
\State $Z\gets$ $\czgate$ vectors generated by the -CNOT- layer
\State $B\gets$ basis for $Z$
\State $B_\perp \gets$ basis for the orthogonal complement of $B$
\FOR{$\czgate(x_i,x_j)\in B_\perp$} 
    \FOR{$(i,j)$ s.t. $x_i\in l_i, x_j \in l_j$ and wires $i,j$ are unoccupied in layer $l$}
    \State $v \gets$ $\czgate$ vector corresponding to $l_i\oplus l_j$
    \State \texttt{\slash* Checking whether $v$ is spanned by $B$ *\slash}
    \FOR{$b\in B$}:
        \IF{$b\cdot v =1$}{
            \State $v\gets v\oplus b$
            \ENDIF}
        \ENDFOR
        \IF{$v\neq 0$}{
            \State $l$.add($\czgate(i,j)$)
            \State $B.$add($v$)
            \IF{${v}_{i,j} = 1$}{
                \State \texttt{\slash* The added CZ generates $\czgate(x_i,x_j)$ as desired *\slash}
                \State \textbf{Continue}
            \ELSE{
                \State \texttt{\slash* The added CZ generates a different missing basis vector in $B_\perp$ *\slash}
                \State $B_\perp.$remove($v$)
            \ENDIF}}
            }\ENDIF
        \ENDFOR
    \State \texttt{\slash* $\czgate(x_i,x_j)$ cannot be inserted without increasing depth *\slash}
    \State -CNOT-.add($\czgate(x_i,x_j)$)
    \ENDFOR
\State\textbf{Return:} -CNOT-
\end{algorithmic}
\end{algorithm}

In case there are insufficient linear functions in a given $\cnotgate$ circuit to form a $\czgate$ basis, we can seek to conditionally add $\czgate$ gates to generate new linearly independent vectors in the space $\mathbb{F}_2^{(n-1)n/2}$. The algorithm accomplishing it works as follows. First, find the $\czgate$ subspace spanned by the linear functions generated, with a basis $B$, and its orthogonal complement, with the basis $B^\perp$. $dim(B^\perp)$ specifies how many linearly independent conditional $\czgate$ gates need to be added since each linearly independent $\czgate$ gate reduces the dimensionality of $B^\perp$ by one. We obtain $B$ and $B^\perp$ by performing Gaussian elimination on the $\czgate$ vectors generated.  These observations allow establishing the exact minimum number of the $\czgate$ gates that need to be added and computing them, thus addressing the scenario of the gate count minimization.  We next consider depth minimization.

For each layer in the given $\cnotgate$ circuit, we greedily add the $\czgate$ to it, so long as it applies to two unoccupied qubits and reduces the dimensionality of $B^\perp$. If no such opportunity exists, we create a new layer at the end of the circuit and add the necessary $\czgate$ to it.

The runtime of the algorithm depends on two steps: performing Gaussian elimination, and checking whether $l_i\oplus l_j$, the EXOR of linear functions on a pair of unoccupied qubits in a given layer, is spanned by $B$. Gaussian elimination algorithm takes $O(n^6)$ time, since CZ vectors have dimension $(n-1)n/2 = O(n^2)$ and the number of linear functions generated is $O(n^2)$. It takes time $O(n^4)$ to check whether each $l_i\oplus l_j$ is spanned by $B$; however, the number of times this check is invoked depends on the number of missing basis vectors as well as the available spots to go through, which can be $O(n^4)$ in the worst case. 

%In practice, the runtime turns out to be $\Tilde{O}(n^4)$ on average, as shown in \fig{runtime}. Firstly, the bit-wise Boolean multiplication and addition can be done efficiently on a modern CPU (frequently, as a single operation over a few machine words); and secondly, the fraction of the missing basis vectors tends to be a small constant on average, as shown in \fig{depthOptimal}.  

Given the performance of the above simple heuristic, we found it to be not worth the time to investigate other approaches to finding a full basis.

%\begin{figure}[h]
%\begin{subfigure}{.5\textwidth}
%  \centering
%  \includegraphics[width=.8\linewidth]{Plots/runtime.png}
%\end{subfigure}%
%\begin{subfigure}{.5\textwidth}
%  \centering
%  \includegraphics[width=.8\linewidth]{Plots/runtime_log.png}
%\end{subfigure}
%\caption{Runtime of \alg{insertCZ}. The log-log plot on the right suggests linear scaling with slope $4$, indicating an $\Tilde{O}(n^4)$ runtime.}\label{fig:runtime}
%\end{figure}

\subsection{Algorithm Performance on Two Synthesis Methods}\label{subsec:HeuristicResults}

In this section, we test the \alg{insertCZ} on random $\cnotgate$ circuits offered by the two synthesis algorithms considered, MZ \cite{maslov2022depth} and PMH \cite{patel2008optimal}.

We sample linear reversible transformations uniformly at random using the method given by \cite[Alg. 4]{bravyi2021hadamard}.  We next apply MZ and PMH algorithms to generate two $\cnotgate$ circuits.  We check whether the $\cnotgate$ circuits generate enough linear functions to form a basis in the $\czgate$ space via Gaussian elimination; if not, we apply the heuristic algorithm that conditionally inserts the $\czgate$ gates on the available qubits to generate the remaining linear functions.  Finally, we compare the circuit depth and gate count before and after the $\czgate$ gate insertion.

The results for the depth-optimized MZ synthesis algorithm \cite{maslov2022depth} are summarized in \fig{depthOptimal}. Remarkably, we found that the $\cnotgate$ circuits synthesized generate a $\czgate$ basis in about $30\%$ of cases when the number of qubits is between $25$ and $125$.  When there are basis vectors missing, the insertion of a linearly independent $\czgate$ gate via \alg{insertCZ} most often comes at no increase in the total depth. The average increase in depth is less than $1$ for each number of qubits $n$ tested over 10,000 trials.

For PMH algorithm \cite{patel2008optimal}, we focus on the minimization of the gate count.  The results are summarized in \fig{gateOptimal}.  We found that for circuits spanning $45$+ qubits, the linear functions generated always form a basis in the $\czgate$ circuit space over $1000$ random linear reversible operations tried for each number of qubits $n$ up to $100$. 

\begin{figure}[t]
%\begin{subfigure}{.5\textwidth}
  \centering
  \includegraphics[width=.6\linewidth]{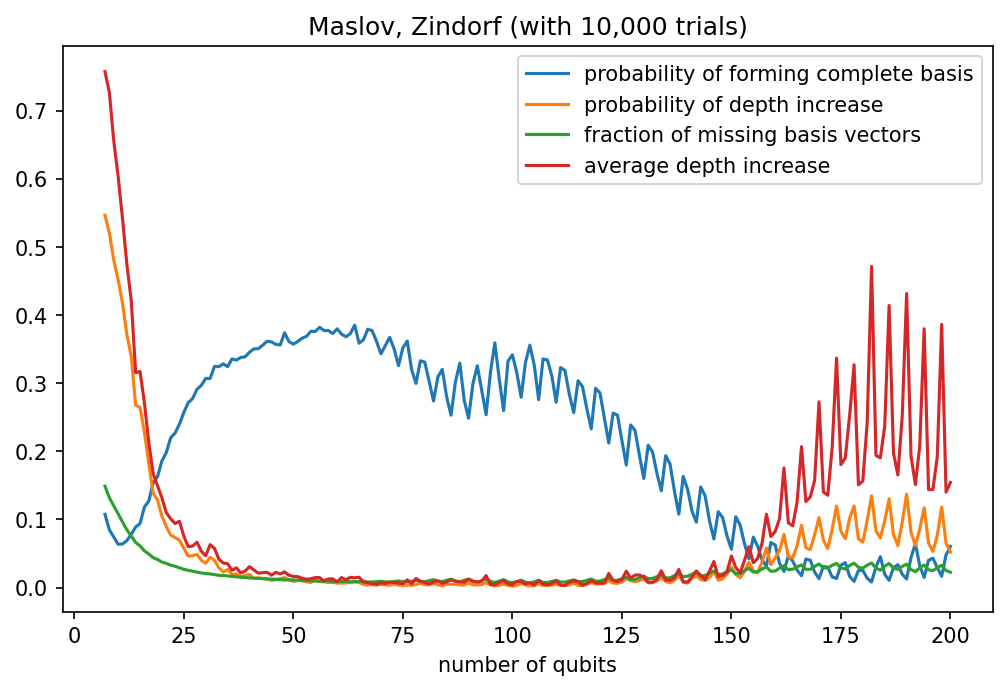}
  \caption{MZ algorithm \cite{maslov2022depth}: the probability that MZ implementation of a random linear reversible function contains a $\czgate$ basis (blue), the probability the depth increases (orange), the fraction of missing basis vectors (green), and the average increase in depth (red) as a function of the number of qubits.
  }
%\end{subfigure}%
%\begin{subfigure}{.5\textwidth}
%  \centering
%  \includegraphics[width=.8\linewidth]{Plots/depth_zoom.png}
%  \caption{The probability that phase insertion increases circuit depth is at %most $5.\%$ for $n>30$, and at most $1.22\%$ for $n>50$.}
%\end{subfigure}
%\caption{Summary of results for the depth-optimized method}
% UNCLEAR WHAT NEW RESULTS DOES THIS FIGURE OFFER.
\label{fig:depthOptimal}
\end{figure}

\begin{figure}[t]
\begin{subfigure}{.485\textwidth}
  \centering
  \includegraphics[width=.9\linewidth]{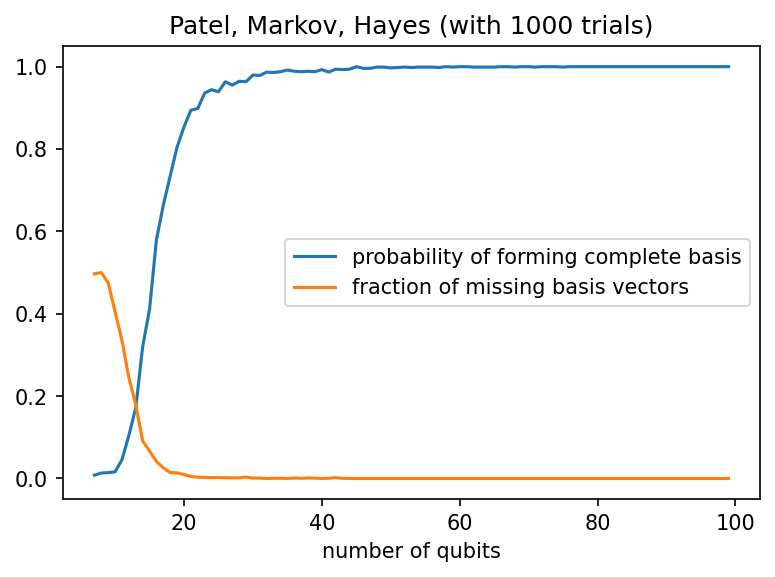}
  \caption{Proportion of the $\czgate$ vectors generated as a function of the number of qubits.}
\end{subfigure}%
\begin{subfigure}{.5\textwidth}
  \centering
  \includegraphics[width=.9\linewidth]{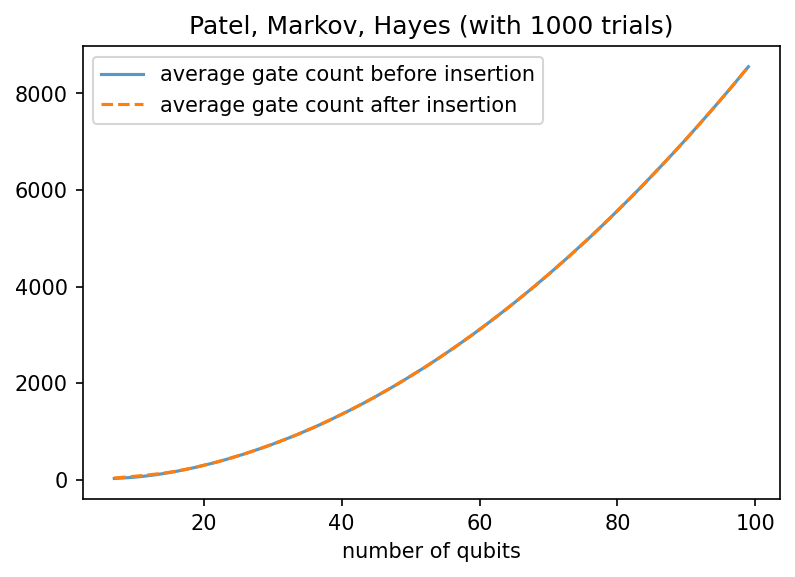}
  \caption{Gate counts before and after $\czgate$ gate insertions as a function of the number of qubits.}
\end{subfigure}
\caption{Summary of results for the PMH algorithm \cite{patel2008optimal}.}
\label{fig:gateOptimal}
\end{figure}

\section{Discussion and Conclusion}\label{sec:conc}

In this paper, we studied the ability to insert Phase gates into $\cnotgate$ circuits to induce arbitrary Hadamard-free Clifford transformation generated by the given linear reversible function.  The reason this is often possible to accomplish is the application of Phase gates to linear functions computed inside the $\cnotgate$ circuit generates vectors in the linear space spanning $\czgate$ circuits, and if sufficiently many independent vectors are found that form the basis in the $\czgate$ circuits space, this guarantees the ability to implement any $\czgate$ transform.  In particular, we showed a depth upper bound of $5n$ for the implementation of arbitrary Hadamard-free Clifford transformation, matching the best-known upper bound of $5n$ to generate arbitrary linear reversible transformation over LNN architecture.  We also developed a heuristic that offers an average depth increase of less than $1$ for all number of inputs $n$ tried ($n {\leq} 200$) to turn a linear reversible circuit into a Hadamard-free Clifford circuit generated by it over all-to-all architecture.  In other words, depths of best-known Hadamard-free Clifford and linear reversible computations appear to be essentially equal in at least two scenarios---LNN and all-to-all architectures.  This is surprising, given the difference between group sizes---$2^{n^2+O(1)}$ for linear reversible computations vs $2^{1.5n^2+O(1)}$ for Hadamard-free Clifford circuits.  Indeed, one would expect the bigger group to demand higher circuit complexity.

We would like to highlight the important role the sorting network (also known as the qubit reversal circuit) composed with $\swapgate$ gates plays over the LNN architecture.  A modification of the sorting network that removes a subset of the $\cnotgate$ gates (each $\swapgate$ can be implemented by three $\cnotgate$ gates) and inserts Phase gates implements arbitrary -CZ- transformation up to qubit reversal in depth $2n{+}2$ \cite{maslov2018shorter}. A modification of the sorting network that erases a subset of the $\cnotgate$s implements arbitrary qubit swapping circuit in depth no more than $3n$ \cite{kutin2007computation}. A modification of the set of two back-to-back sorting networks that removes a subset of the $\cnotgate$ gates implements arbitrary linear reversible function in depth no more than $5n$ \cite{kutin2007computation}.  A modification of the set of two back-to-back sorting networks that removes a subset of the $\cnotgate$ gates and inserts Phase gates implements arbitrary Hadamard-free Clifford circuit in depth no more than $5n$.  Finally, a modification of the set of three back-to-back sorting networks with a Hadamard layer between some two of them that removes a subset of the $\cnotgate$ gates and inserts Phase gates, together with a few local optimizations, implements arbitrary Clifford circuit in depth no more than $7n{-}4$.

\section*{Acknowledgement}

We would like to thank Ben Zindorf (University College London) for his helpful discussions during the early stage of this research.

\end{document}